\newtheorem{proposition}{Proposition}
\newtheorem{cor}{Corollary}
\newtheorem{lemma}{Lemma}
\renewcommand{\Re}{{\rm Re\,}}
\def\rem[#1]{{\bf [#1]}}
\newcommand{\Z}{\mathbb{Z}} 
\newcommand{\C}{\mathbb{C}} 
\newcommand{\R}{\mathbb{R}} 
\newcommand{\T}{\mathbb{T}} 
\newcommand{\id}{\mathbb{I}} 
\newcommand{\numb}{number } 
\newcommand{\Numb}{Number } 
\def\idty{{\leavevmode\rm 1\mkern -5.4mu I}} 
\def\braket#1#2{\langle #1,#2\rangle}
\def\ketbra #1#2{{\vert#1\rangle\langle#2\vert}}
\def\kettbra#1{\ketbra{#1}{#1}}
\def\abs#1{|#1|}
\def\norm#1{||#1||}
\def\Fou{{\mathcal F}}
\def\HH{{\mathcal H}}
\def\BB{{\mathcal B}}
\let\veps\varepsilon
\def\mby{{\circ}} 
\def\mby{_}
\def\mst{{\rm std}}
\def\mdi{{\rm dis}} 
\def\mar{{\rm arc}}
\def\mch{{\rm cho}}
\def\dst{d_\mst}
\def\ddi{d_\mdi}
\def\dar{d_\mar}
\def\dch{d_\mch}
\def\couples#1#2#3{#1\!\lhd\!#2{\rhd}#3}
\def\PU{\hbox{\textsf{PU}}}
\def\MU{\hbox{\textsf{MU}}}
\def\CU{\hbox{\textsf{CU}}}
\def\tr{\mathop{\rm tr}\nolimits}
\def\Mee{{\mathcal M}(\veps_1,\veps_2)}
\def\essential{\mathop{\rm ess}}
\def\nmax{N_{\rm max}}
\def\hypoo{\,{}_1{\mathrm F}_1}
\def\matze{{\rm ce}_0}
\begin{document}
\title{Sharp uncertainty relations for \numb and angle }

\author{Paul Busch}
\email{paul.busch@york.ac.uk}
\affiliation{Department of Mathematics, University of York, York, United Kingdom}

\author{Jukka Kiukas}
\email{jek20@aber.ac.uk}
\affiliation{Department of Mathematics, Aberystwyth University, SY23 3BZ Aberystwyth, United Kingdom}

\author{R.F. Werner}
	\email{reinhard.werner@itp.uni-hannover.de}
\affiliation{Institut f\"ur Theoretische Physik, Leibniz Universit\"at, Hannover, Germany}

\begin{abstract} We study uncertainty relations for pairs of conjugate variables like \numb and angle, of which one takes integer values and the other takes values on the unit circle. The translation symmetry of the problem in either variable implies that measurement uncertainty and preparation uncertainty coincide quantitatively, and the bounds depend only on the choice of two metrics used to quantify the difference of \numb and angle outputs, respectively. For each type of observable we discuss two natural choices of metric, and discuss the resulting optimal bounds with both numerical and analytic methods. We also develop some simple and explicit (albeit not sharp) lower bounds, using an apparently new method for obtaining certified lower bounds to ground state problems.
\end{abstract}\maketitle

\section{Introduction}



The study of uncertainty relations has experienced  a major boost in recent years. As more and more experiments reach quantum limited accuracy, sharp quantitative uncertainty and error bounds become more relevant. It has also become evident that the subject of quantum uncertainty cannot be reduced to the classic standard uncertainty relation that was first made rigorous by Kennard \cite{Kennard}, Robertson \cite{Robertson} and others and is now found in every textbook. The purpose of this paper is to present a new case study that exemplifies the three main directions of generalization currently being pursued. The example at hand is given by the number-angle pair of observables, where by  ``number'' we understand an observable whose spectrum is the set of all integers.

 The first  extension of the uncertainty principle concerns the set of scenarios to which quantitative uncertainty relations apply. 
 The Kennard relation is a ``preparation uncertainty relation'', i.e., a quantitative expression of the observation that there is no state preparation for which the distributions of two observables under consideration are both sharp. This relation can be tested, in principle, by separate runs of precise measurements of the two observables, performed on  ensembles of systems in the same state.
  In contrast to this, one may consider attempted joint measurements of noncommuting observables, as already intuitively envisaged by Heisenberg \cite{Heisenberg27}; one finds that such joint measurements are constrained by ``measurement uncertainty relations'' \cite{Wer04,BP07,BLWprl} which describe the unavoidable error bounds. 

The number-angle pair also highlights the need to consider uncertainty measures other than standard deviations. 
The second direction of generalization to be considered is thus in the concrete mathematical expressions measuring the ``sharpness'' of distributions, or the error of an approximate measurement. 
We will consider two alternative types of uncertainty measures and associated error measures for each of the two observables concerned.

Finally, the third direction of generalization concerns the forms of preparation and measurement uncertainty relations that are applicable to arbitrary pairs of observables. The Robertson relation involving the expectations of the commutator in the lower bound fails to provide preparation uncertainty relation in the above sense because the only state-independent lower bound one can get from it is zero. Nevertheless, non-trivial state-independent bounds usually do exist. Uncertainty is really a ubiquitous phenomenon, joint measurability or simultaneous sharp preparability are the exceptions rather than the rule. Accordingly, the problem of establishing tight uncertainty relations for pairs of observables amounts to the task of establishing their (preparation or measurement) \emph{uncertainty regions}, defined as the set of pairs of uncertainty values in all states (for preparation uncertainty) and the set of pairs of error values in all possible joint measurements (for measurement uncertainty), respectively.


It is a pleasing property of the case at hand---\numb and angle---that an essentially complete treatment can be given. This is due to the phase space symmetry which implies, exactly as for standard position and momentum \cite{BLWjmp}, that ``metric distance'' and ``calibration distance'' for the error assessment of observables satisfy the same relations and also that they are quantitatively the same as corresponding preparation uncertainty relations. In this paper we deduce the ensuing measurement uncertainty relations for \numb and angle.

Our paper is organized according to the methods employed. In Section \ref{sec:basics} we review the conceptualization of the preparation uncertainty measures and associated error measures to be used throughout the paper. This is followed by an overview of our main results (Section \ref{sec:overview}). In Section \ref{sec:sym} we show how the phase space symmetry can be exploited to find optimal joint measurements among the covariant phase space observables. Here we show the identity of preparation uncertainty regions and measurement uncertainty (or error) regions in the case at hand, introducing calibration uncertainty regions as a mediating construct. The lower boundary of the uncertainty regions is characterized as a ground state problem. Numerical estimates of the optimal tradeoff curves are shown in Section \ref{sec:numerics}.
Next, Sections \ref{sec:exact} and \ref{sec:lower} give determinations of the exact ground states and lower boundaries for the various combinations of deviation measures, and we show how existing uncertainty relations for \numb and phase can be reproduced or strengthened using our systematic approach. We conclude with an outlook in Section \ref{sec:outlook}.

\section{Conceptual Uncertainty Basics}\label{sec:basics}

\subsection{\Numb and angle observables}
The complementarity between \numb and angle appears in physics in various guises. Essentially, this is for any parameter with a natural periodicity. Geometric angles are one case, with the complementary variable given by a component of angular momentum. Another important case is quantum optical phase, which is complementary to an harmonic oscillator Hamiltonian. At least for preparation uncertainty this is no difficulty, since a relation valid for all states also holds for states supported on the subspace of positive integers. One does get additional or sharper relations from building in this constraint, however. A further important field of applications is quasi-momentum with values in the Brillouin-Zone for a lattice system (of which we only consider the one-dimensional case here). This is also related to form factors arising in the discussion of diffraction patterns and fringe contrast from periodic gratings \cite{BBK}.

The literature on angle-angular momentum uncertainty is almost exclusively concerned with the preparation scenario, although the lack of an error disturbance relation has been noted \cite{tanimu}. In the preparation case a major obstacle was the Kennard and Robertson \cite{Robertson} relation and their role as a model how uncertainty relations should be set up. There is nothing wrong with an observable with outcomes on a circle. But much work was wasted on the question of how to represent ``angle'' measurements by a selfadjoint operator \cite{JudgeLewis,Kraus}. Additional unnecessary confusion in the case of semibounded \numb and quantum optical phase was generated by the ignorance or lack of acceptance of generalized (``POVM'') observables. On the positive side, an influential paper by Judge \cite{judge63} produced a relation (for the arc metric), and conjectured an improvement, which was proved shortly afterwards \cite{EvettMahmoud}. In this context the role of ground state problems for finding optimal bounds, which is also the basis of our methods, seems to have appeared for the first time \cite{vanLeuven}. The appearance of the chordal metric grew out of the approach of avoiding the ``angle'' problem, replacing $\theta$ by the two selfadjoint operators $\cos\theta$ and $\sin\theta$. 

\subsection{Measures of uncertainty and error}
Both in preparation and in measurement uncertainty we have to assess the difference of probability distributions: For preparation uncertainty it is the difference from a sharp distribution concentrated on a single point. This is also the basis of calibration error assessment. For metric error we also need to express the distance of two general distributions. We want to express the distance between distributions on the same scale as the distance of points. 
For example, in the case of position and momentum errors, for all error measures one considers $\Delta Q$ to be measured in length units and $\Delta P$ in momentum units, and this is satisfied by the choice of standard deviation for these measures.

So let us assume that the outcomes of some observable (represented by a POVM, a positive operator valued measure) lie in a space $X$ with metric $d$. For real valued quantities like a single component of position or momentum this usually means $X=\R$ and $d(x,y)=|x-y|$.

We now extend the distance function on the points to a distance between a probability measure $\mu$ on $X$ and a point $x\in X$. It will just the be the mean distance from $x$:
\begin{equation}\label{meandist}
 d_\alpha(\mu,x)=\left(\int\!\mu(dy) d(y,x)^\alpha\right)^{\frac1\alpha}.
\end{equation}
Here the exponent $\alpha\in[1,\infty)$ gives some extra flexibility as to how large deviations are weighted relative to small ones. The $\alpha^{\rm th}$ root ensures that the result is still in the same units as $d$, and also that for a point measure $\mu$ concentrated on a point $y$ we have $d_\alpha(\mu,x)=d(y,x)$ for all $\alpha$. It is also true for all $\alpha$ that $d_\alpha(\mu,x)=0$ happens only for the point measure at $x$. We will later mostly choose $\alpha=2$ and drop the index $\alpha$; in this case $d_2(\mu,x)$ is the root mean square  distance from $x$ for points distributed according to $\mu$.

With this measure of deviation of a distribution $\mu$ from the point we can introduce the generalized standard deviation,
\begin{equation}\label{stdev}
  d_\alpha(\mu,*)=\min_{x\in X}d_\alpha(\mu,x).
\end{equation}
Note that for $\alpha=2$, $X=\R$ and $d(x,y)=|x-y|$ this recovers exactly the usual standard deviation, with the minimum being attained at the mean of $\mu$. The symbol $*$ is just a reminder of the minimization, and emphasizes that $d_\alpha(\mu,*)$ is just the distance of $\mu$ from the set of point measures.

For this interpretation to make sense we must also let the second argument of $d_\alpha$ be a general probability distribution $\nu$, resulting in a metric on the set of probability measures.
The canonical definition here is the transport distance \cite{Villani}
\begin{equation}\label{meanmindist}
 d_\alpha(\mu,\nu)=\inf_\gamma\left\lbrace\int\!\gamma(dx\,dy) d(x,y)^\alpha \Bigm\vert \couples\mu\gamma\nu\right\rbrace^{\frac1\alpha},
\end{equation}
where ``$\couples\mu\gamma\nu$'' is a shorthand for $\gamma$, the variable in this infimum, being a ``coupling'' of $\mu$ and $\nu$, i.e., it is a measure on $X\times X$ with $\mu$ and $\nu$ as its marginal distributions. One should think of $\gamma$ as a plan for converting the distribution $\mu$ into $\nu$, maybe for some substance rather than for probability. The cost of transferring a mass unit from $x$ to $y$ is supposed to be $d(x,y)^\alpha$, and the plan $\gamma$ records just how much mass is to be moved from $x$ to $y$. The marginal property means that the initial distribution is $\mu$ and the final one $\nu$. Then $d_\alpha(\mu,\nu)^\alpha$ is the optimized cost. When the final distribution is a point measure, there is not much to plan, and we recover \eqref{meandist}. Therefore there is little danger of confusion in using the same symbol for the metrics of points and of probability measures.

Now we can use these notions of spread and distance for expressing uncertainties related to observables $A,B$ with outcome spaces $X$ and $Y$, each with a suitably chosen metric and error exponent. Let us denote by $\rho\mby A$ the probability measure of outcomes in $X$ upon measuring $A$ on systems prepared according to $\rho$. To express {\it preparation uncertainty}, let us consider the set $\PU$ of generalized variance pairs 
\begin{equation}\label{devpairs}
  \PU=\Bigl\{ \bigl(d_\alpha(\rho\mby A,*)^\alpha,\,d_\beta(\rho\mby B,*)^\beta\bigr)\bigm\vert \rho \text{ a state}\Bigr\}.
\end{equation}
A preparation uncertainty relation is some inequality saying that the uncertainty region does not extend to the origin: the two deviations cannot both be simultaneously small. If this set is known, we consider it as the most comprehensive expression of preparation uncertainty. Its description by inequalities for products or weighted sums or whatever other expression is a matter of mathematical convenience, and we will, of course, develop appropriate expressions. A lower bound for the product is useful {\it only} for position and momentum and its mathematical equivalents. In this case the dilatation invariance $(q,p)\mapsto(\lambda q,\lambda^{-1}q)$ forces the uncertainty region to be bounded by an exact hyperbola. But if one of the observables considered can take discrete values, the set will reach an axis, making every state-independent lower bound on the product trivial. 

We should note that the set \eqref{devpairs} is in general not convex, and can have holes (for examples, see \cite{AMU}). However, in order to express lower bounds, the essence of uncertainty, it makes no difference if we fill in these holes, and include with every point also those for which both coordinates are larger or the same. The resulting set, the `monotone hull'  $\PU^+$ of $\PU$, is bounded below by the graph of a non-increasing function, the {\it tradeoff curve} (see Fig.~\ref{fig:basic4} for examples). It still need not be convex in general, but we will see that convexity holds in the examples we study.

For {\it measurement uncertainty} we again consider two observables (POVMs) $A,B$ with the same outcomes, metrics and error exponents. Now the question is: can $A,B$ be measured jointly? The claim is, usually, that no matter how we try there will be an error in our implementation. So let $A',B'$ be the margins of some joint measurement with outcomes $X\times Y$. Then $A'$ must exhibit some errors relative to $A$, i.e., some output distributions $\rho\mby{A'}$ must be different from $\rho\mby{A}$. We define as the error of $A'$ with respect to $A$ the quantity
\begin{equation}\label{dAA}
  d_\alpha(A',A)=\sup_\rho d_\alpha(\rho\mby{A'},\rho\mby{A}).
\end{equation}
Note that we are using here a worst case quantity with respect to the input state. This is what we should do for a figure of merit for a measuring instrument. If a manufacturer claims that his device $A'$ will produce distributions $\varepsilon$-close to those of $A$ for any input state, he is saying that $d_\alpha(A',A)\leq\varepsilon$. Making such a claim for just a single state is as useless as advertising a clock which measures ``the time $12{:}00$'' very precisely (but maybe no other). Now we can look at the uncertainty region
\begin{equation}\label{MUpairs}
  \MU=\Bigl\{ \bigl(d_\alpha(A',A)^\alpha,\,d_\beta(B',B)^\beta\bigr)\bigm\vert \couples{A'}{\!}{B'}\Bigr\},
\end{equation}
where the 
notation $\couples{A'}{\!}{B'}$ indicates that $A'$ and $B'$ are jointly measurable (that is, they are margins of some POVM that serves as a joint observable). All general remarks made about the preparation uncertainty region $\PU$ also hold for $\MU$.

The supremum in \eqref{dAA} is rather demanding experimentally. Good practice for testing the quality of a measuring device is {\it calibration}, i.e., testing it on states with known properties, and seeing whether the device reproduces these properties. In our case this means testing the device $A'$ on states $\rho$ whose $A$-distribution is sharply concentrated around some $x$, and looking at the spread of the $A'$-distribution around the same $x$. 
We define as the calibration error of $A'$ with respect to $A$ the quantity
\begin{equation}\label{dAAmin}
  \Delta_\alpha^c(A',A)=\lim_{\varepsilon\to0}\sup\left\lbrace d_\alpha(\rho\mby{A'},x)
                           \Bigm\vert d_\alpha(\rho\mby{A},x)\leq\varepsilon\right\rbrace.
\end{equation}
Here the limit exists because the set (and hence the sup) is decreasing as $\varepsilon\to0$. This definition only makes sense if there actually are sufficiently many sharp states for $A$, so we will use this definition only when the reference observable $A$ is projection valued. Since the calibration states in this definition are also contained in the supremum \eqref{dAA}, it is clear that $\Delta_\alpha^c(A',A)\leq d_\alpha(A',A)$, so the set $\CU$ of calibration error pairs will generally be larger than $\MU$.

\section{Setting and Overview of Results}\label{sec:overview}

We now consider systems with Hilbert space $\HH=L^2(\T,d\theta)$, where $\T$ denotes the unit circle, with $d\theta$ the integration over angle. The notation derives from ``torus'' and is customary in group theory. We use it here to emphasize the group structure (of multiplying phases or adding angles mod\,$2\pi$) but also to avoid a fixed coordinatization such as $\T\cong[-\pi,\pi)$, which would misleadingly assign a special role to the cut point $\pm\pi$.  We will refer to $\T$ as our ``position space''. The corresponding ``momentum'' space is $\Z$, and changing to the momentum representation in $\ell^2(\Z)$ is done by the unitary operator of expanding in a Fourier series. With $\{e_n\}_n\in\Z$ denoting the Fourier basis, this means that
\begin{equation}\label{Fourier}
  (\Fou\psi)(n)=\langle{e_n}\vert{\psi}\rangle=\frac1{\sqrt{2\pi}}\int\!\! d\theta\, e^{i n \theta} \psi(\theta).
\end{equation}
We have two natural projection valued observables, the angle (=position, phase) observable $\Theta$ taking values on $\T$, and the \numb (=angular momentum observable) $N$ with values in $\Z$. That is, if $f:\T\to\C$ is some function of the angle variable, $f(\Theta)$ denotes the multiplication operator $(f(\Theta)\psi)(\theta)=f(\theta)\psi(\theta)$, and similarly $g(N)$ denotes the multiplication by a function $g(n)$ in the momentum representation. The outcome proability densities of these observables on an input state $\rho$ are denoted by $\rho\mby\Theta$ and $\rho\mby N$, respectively. Thus,
\begin{eqnarray}
  \int\!\!d\theta\,\rho\mby\Theta(\theta)\,f(\theta)&=&\tr\rho f(\Theta), \\
  \sum_n \rho\mby N(n)\, g(n)&=& \sum_n \langle e_n|\rho|e_n\rangle\ g(n).
\end{eqnarray}
Then the basic claim of preparation uncertainty is that $\rho\mby\Theta$ and $\rho\mby N$ cannot be simultaneously sharp, and the basic claim of measurement uncertainty is that there is is no observable with pairs of outcomes $(\theta,n)$ for which the marginal distributions found on input state $\rho$ are close to $\rho\mby\Theta$ and $\rho\mby N$.

In order to apply the ideas of the previous section, we need to choose a metric in each of these spaces.
For discrete values ($X=\Z$) we can naturally take the standard distance or a discrete metric:
\begin{eqnarray}\label{donZ}
  \dst(n,m)&=&|n-m|    \qquad\text{or}\nonumber\\
  \ddi(n,m)&=&1-\delta_{nm}.
\end{eqnarray}
Similarly, there are two natural choices for angles, depending on whether the basis for the comparison is how far we have to rotate to go from $\theta$ to $\theta'$  (``arc distance'') or else the distance of phase factors   $\exp(i\theta)$ and $\exp(i\theta')$ in the plane (``chordal distance''):
\begin{eqnarray}\label{donT}
  \dar(\theta,\theta')&=&\min_{n\in\Z}|\theta-\theta'-2\pi n|    \qquad\text{or}\nonumber\\
  \dch(\theta,\theta')&=&\left|e^{i\theta}-e^{i\theta'}\right|=2\left|\sin\frac{\theta-\theta'}2\right|.
\end{eqnarray}
The variances based on these bounded metrics will have an upper bound. Since the minimum in \eqref{stdev} makes $d_\alpha(\mu,*)^\alpha$ a concave function of $\mu$, so that we find, by averaging over translates, that the equidistribution has the maximal variance for all translation invariant metrics and all exponents. Both metrics, or rather their quadratic ($\alpha=2$) variances have been discussed before. The $\dar$-variance was used by L\'evy \cite{Levy} and Judge \cite{judge63}, the $\dch$-variance seems to have appeared first in von Mises \cite{vMises}. In fact, the quadratic chordal variance can also be written
as
\begin{eqnarray}\label{vMis}
  d_{\mch,2}(\mu,*)^2&=&\inf_\alpha\int\!\!\mu(d\theta)\left|e^{i\theta}-e^{i\alpha}\right|^2 \nonumber\\
     &=&2\Bigl(1-\sup_\alpha\Re \ e^{-i\alpha} \int\!\!\mu(d\theta)e^{i\theta} \Bigr)\nonumber\\
     &=&2\Bigl(1-\bigl|\langle e^{i\theta}\rangle_\mu\bigr|\Bigr),
\end{eqnarray}
which is von Mises' ``circular variance''. For a review of these choices see Ref.~\onlinecite{breitenberger}. The only property needed in our approach is that the metric should not break the rotation invariance, i.e., it should be a function of the difference of angles. We will therefore use every metric  $d$ also as a single variable function, i.e., $d(x)=d(x,0)$ and $d(x,y)=d(x-y)$. Functions which do not come from a metric have been considered in Ref.~\onlinecite{alonso}.

\begin{figure*}[t]
	\includegraphics[width=0.75\textwidth]{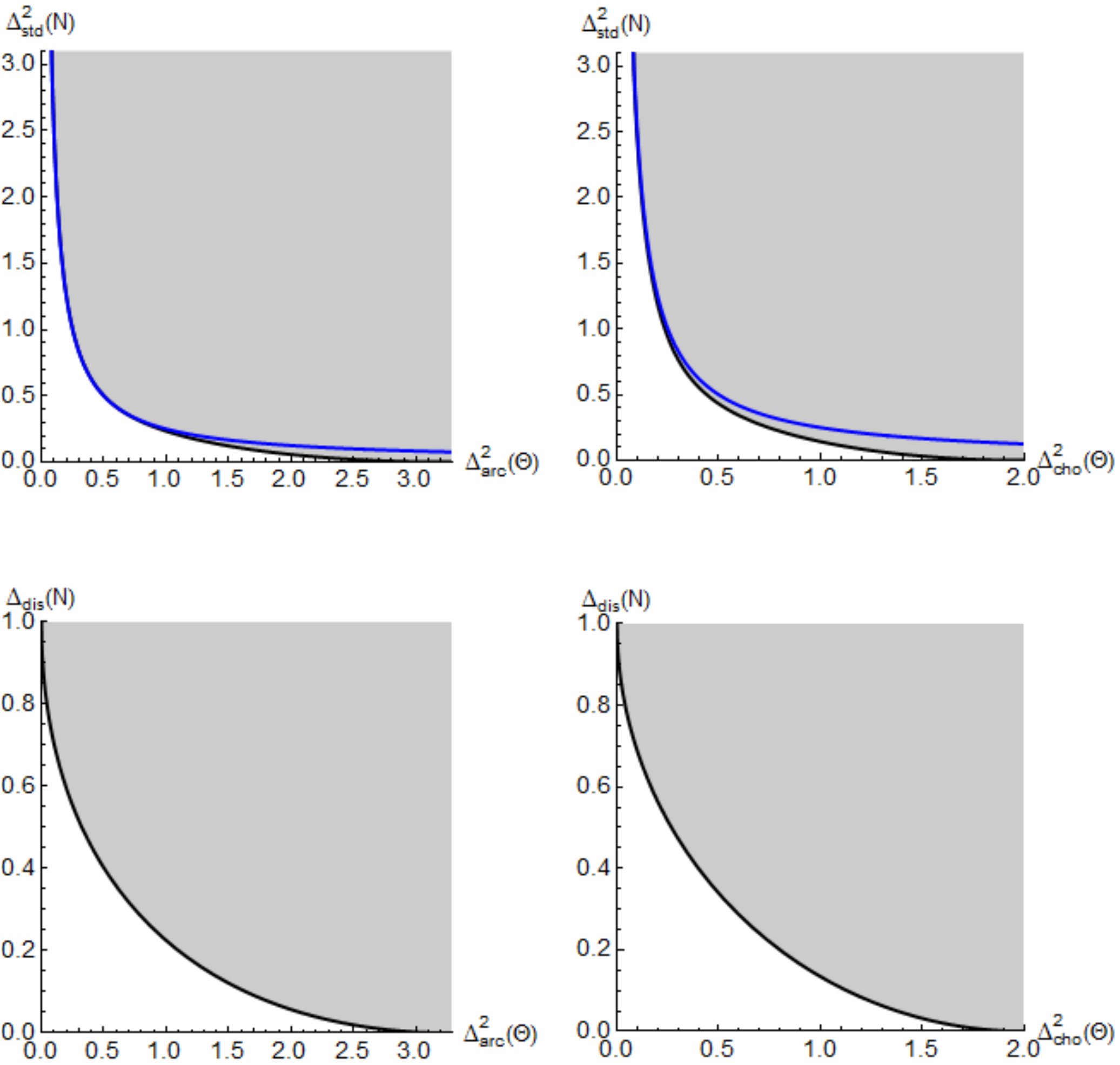}
	\caption{The uncertainty regions for the four pairs of metrics given by \eqref{donZ} and \eqref{donT}. Each region represents at the same time preparation and measurement uncertainty
       (for metric or calibration error criterion).  For the standard metric on $\Z$ the position-momentum bound is shown for comparison as a blue line.
       $\Delta^2_\mst(N)$ is an unbounded quantity. For the others the maximal interval is displayed.}
	\label{fig:basic4}
\end{figure*}

Then we have the following result.

\begin{proposition}\label{mainthm}
For all error exponents and choices of translation invariant metric the three uncertainty regions $\PU^+=\MU^+=\CU^+$ coincide.
They are depicted for $\alpha=\beta=2$ in Figure~\ref{fig:basic4}.
Every point on one of the tradeoff curves belongs to a unique pure state
(resp. a unique extremal phase space covariant joint measurement).
\end{proposition}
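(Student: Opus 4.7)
The plan is to exploit the phase-space symmetry group $G = \T \times \Z$ acting projectively on $\HH = L^2(\T)$ by Weyl-type operators $W(\theta,n)\psi(\phi) = e^{in\phi}\psi(\phi-\theta)$, under which $\Theta$ and $N$ are covariant and both choices of metric are invariant. The first step is to reduce the study of $\MU^+$ to $G$-covariant joint measurements: given any joint POVM with marginals $(A',B')$, the $G$-average is a covariant joint POVM whose marginal errors $d_\alpha(A',A)$ and $d_\beta(B',B)$ are no larger than the originals, by translation invariance of the metrics and by Jensen's inequality applied to the transport cost (one averages the transport couplings alongside the POVMs). Hence the lower boundary of $\MU^+$ is already attained by covariant measurements.

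Second, $G$-covariant joint POVMs are classified by a single density operator $\sigma$ on $\HH$ via the Holevo/Werner construction, and their marginals act by convolution: $\rho\mby{A'}$ equals $\rho\mby\Theta$ convolved on $\T$ with the angle distribution of the parity-reflected $\sigma$, and analogously for $B'$ on $\Z$. By translation invariance of the $d_\alpha$ transport distance, the worst-case deviation over inputs collapses to the deviation of the convolution kernel from a point mass at the origin:
\begin{equation*}
d_\alpha(A',A) = d_\alpha(\tilde\sigma\mby\Theta,*), \qquad d_\beta(B',B) = d_\beta(\tilde\sigma\mby N,*),
\end{equation*}
and this supremum is already attained on sharply peaked (calibration) input states of $A$ and $B$. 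The right-hand sides are the preparation deviations of the state $\tilde\sigma$, giving $\CU^+ = \MU^+ \subseteq \PU^+$. Conversely, picking $\sigma = \rho$ for any given state realizes its preparation pair as a measurement pair, yielding the reverse inclusion and hence equality of the three regions.

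For the tradeoff curve, every boundary point of $\PU^+$ arises as the minimizer of a weighted sum
\begin{equation*}
\lambda\, d_\alpha(\rho\mby\Theta,*)^\alpha + (1-\lambda)\, d_\beta(\rho\mby N,*)^\beta, \qquad \lambda \in (0,1).
\end{equation*}
By translation invariance, the inner reference-point minimizations can be absorbed by a phase-space shift, reducing the problem to finding the ground state of the self-adjoint operator
\begin{equation*}
H_\lambda = \lambda\, d(\Theta)^\alpha + (1-\lambda)\, d(N)^\beta
\end{equation*}
on $L^2(\T)$, with $d(\Theta)$ a bounded multiplication and $d(N)$ a (possibly unbounded) multiplier in Fourier space.

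The main obstacle will be uniqueness of the ground state of $H_\lambda$, which must give a unique pure-state minimizer and, via the parameterization above, a unique extremal covariant joint measurement. I expect this to follow from a Perron--Frobenius-type argument: the semigroup generated by the Fourier multiplier $d(N)^\beta$ is (up to a pointlike contribution in the discrete-metric case) a convolution with a strictly positive kernel on $\T$, so the full semigroup $e^{-tH_\lambda}$ is positivity-improving in the angle representation whenever the potential $d(\Theta)^\alpha$ is bounded and non-constant. Verifying this property uniformly across the four combinations of metrics, in particular for the bounded $d_\mdi$ and the non-smooth $d_\mch$, is the technical heart of the proof.
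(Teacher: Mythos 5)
Your overall strategy---reduce to covariant joint measurements, use the convolution structure of their margins to identify measurement errors with preparation deviations of the generating density operator, and characterize the tradeoff curve by a ground state problem---is exactly the route the paper takes. But the first step as you state it has a genuine gap: the symmetry group $\T\times\Z$ is not compact, so ``the $G$-average'' of a joint POVM is not defined by integration against Haar measure, and one cannot simply invoke an invariant mean either, because a sequence of large $\Z$-shifts of a fixed observable shows that an averaging limit can lose normalization (mass escapes to infinity). The paper instead fixes finite error bounds $(\veps_1,\veps_2)$, shows that the set $\Mee$ of joint observables meeting them is convex, invariant under phase-space shifts and---crucially, because finite errors prevent escape to infinity---compact in a suitable weak topology, and then applies the Markov--Kakutani fixed point theorem to produce a covariant element of $\Mee$. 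Some argument of this kind is required; translation invariance of the metrics plus Jensen's inequality alone does not deliver it.

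Two smaller points. First, for a fixed covariant observable $F_\sigma$ the worst-case error is $d_\alpha(\sigma\mby\Theta,0)$, the deviation from the origin, not the variance $d_\alpha(\sigma\mby\Theta,*)$ as you wrote: if $\sigma\mby\Theta$ were a point mass at $x\neq 0$ the measurement would have error $d(x)>0$ but zero variance. Equality of the regions survives because one may translate $\sigma$ so that the two quantities coincide before optimizing, but the identity as you state it is false. Second, the claim that ``every boundary point of $\PU^+$ arises as the minimizer of a weighted sum'' presupposes convexity of $\PU^+$, which is not automatic; the Legendre-transform/ground-state method a priori describes only the convex hull, and the paper closes this loop by using the spectral gap estimate to show that the ground states trace out a continuous curve of uncertainty pairs. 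Your Perron--Frobenius plan for uniqueness of the ground state is sound and is essentially what underlies the paper's appeal to standard Schr\"odinger-operator theory in the $\dst$ cases; for the discrete metric the paper sidesteps it by solving the resulting rank-one eigenvalue problem explicitly.
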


The proof of this Proposition is based entirely on the corresponding proof for standard position and momentum \cite{BLWjmp}. We will sketch the main steps in the next section, and also show how the computation of the tradeoff curve can be reduced to solving ground state problems for certain Hamiltonians. The detailed features of these diagrams are then developed in the subsequent sections, sorted by the methods employed. The tradeoff curves  in Figure~1 for the generalized variances, denoted simply $\Delta^2(N),\Delta^2(\Theta)$, 
are determined numerically  (see Section~\ref{sec:numerics}). Since the algorithms employed provide optimal bounds, the figures are correct within pixel accuracy (which can be easily pushed to high accuracy). In fact, the problem is very stable, in the sense that near minimal uncertainty implies that the state (or joint observable) is close to the minimizing one. The bounds for this are in terms of the spectral gap of the Hamiltonian and are also discussed in Section~\ref{sec:numerics}.

However, the only case in which the exact tradeoff curves can be described in closed form is (see Section~\ref{sec:discrete})
\begin{equation}\label{dischord}
  \Delta_\mdi(N)\geq 1-\frac12\sqrt{\Delta^2_\mch(\Theta)\bigl(4-\Delta^2_\mch(\Theta)\bigr)}\ ,
\end{equation}
even though, in all cases, the optimizing states can be expressed explicitly in terms of standard special functions (see Section~\ref{sec:exact}). Therefore simple and explicit lower bounds are of interest. A problem here is that computing the variances for some particular state always produces a point inside the shaded area, i.e., an upper bound to the lower bound represented by the tradeoff curve. This is useless for applications, so in Section~\ref{sec:lower} we develop a procedure proving lower bounds, and thus correct (if suboptimal) uncertainty relations.

\section{Phase Space Symmetry and Reduction to a Ground State Problem}\label{sec:sym}

In this section we briefly sketch the arguments leading to the equality of preparation und measurement uncertainty regions. The full proof is directly parallel to the one given in Ref.~\onlinecite{BLWjmp} for position and momentum. The basic reference for phase space quantum mechanics is Ref.~\onlinecite{QHA}. The theory there is developed for phase spaces of the form $\R^n\times\R^n$, but all results we need here immediately carry over to the general case $X\times\widehat X$, where $X$ is a locally compact abelian group and $\widehat X$ is its dual, in our case $X=\T$ and $\widehat X=\Z$. A systematic extension of Ref.~\onlinecite{QHA} to the general case, including the finer points, is in preparation in collaboration with Jussi Schultz.

\subsection{Covariant phase space observables}

The phase space in our setting is the group $\Omega=\T\times\Z$. We join the position translations and the momentum translations to phase space translations, which are represented by the displacement or {\it Weyl operators}
\begin{equation}\label{Weyl}
  (W(n,\theta)\psi)(x)= e^{-\frac 12 n\theta}e^{-inx}\psi(x-\theta).
\end{equation}
These operators commute up to a phase, so that the operators $T_\omega(A)=W(\omega)^*AW(\Omega)$, i.e., the action of the Weyl operators on bounded operators $A\in\BB(\HH)$, is a representation of the abelian group $\Omega$. A crucial property is the square integrability of the matrix elements of the Weyl operators, which we will use in the form that for any two trace class operators $\rho,\sigma$ on $\HH$ the formula
\begin{equation}\label{sqint}
  \int d\omega \tr\bigl( \rho\,T_\omega(\sigma)\bigr)=\tr \rho\,\tr \sigma,
\end{equation}
where $\int\!\!d\omega=\sum_n\int\!\frac{d\theta}{2\pi}$. Hence, when both $\rho$ and $\sigma$ are density operators the integrand in this equation is a probability density on $\Omega$. 
In summary \cite{QHA}:
\begin{proposition}
Every density operator $\sigma$ serves an observable $F_\sigma$ with outcomes $\omega$ in the phase space $\Omega$, via
\begin{equation}\label{covobs}
  F_\sigma(A)=\int_{\omega\in A}\mskip-20mu d\omega\ T_\omega(\sigma).
\end{equation}
Here $\sigma$ figures as the operator valued Radon-Nikodym density with respect to $d\omega$ of $F_\sigma$ at the origin, and by translation also at arbitrary points.
The observables obtained in this way are precisely the {\it covariant} ones, i.e., those satisfying the equation $T_\omega(F(A))=F(A-\omega)$.
\end{proposition}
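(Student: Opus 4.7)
The plan is to verify three things: that $F_\sigma$ defined by \eqref{covobs} is a genuine POVM, that it is covariant, and conversely that every covariant POVM arises in this way from a unique density operator $\sigma$. The main technical tool throughout is the square-integrability identity \eqref{sqint}, which is responsible for normalization and also underlies the inversion needed for the converse.

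For the direct part, positivity of $F_\sigma(A)$ for every Borel $A\subset\Omega$ is immediate, since $T_\omega(\sigma) = W(\omega)^*\sigma W(\omega)$ is positive for each $\omega$ and the weak integral of a measurable family of positive operators is positive. Normalization $F_\sigma(\Omega)=\idty$ follows by pairing with an arbitrary density operator $\rho$ and invoking \eqref{sqint}:
\begin{equation*}
\tr\bigl(\rho\, F_\sigma(\Omega)\bigr) = \int\! d\omega\,\tr\bigl(\rho\, T_\omega(\sigma)\bigr) = \tr\rho\,\tr\sigma = 1 .
\end{equation*}
Covariance is a direct computation. Since the Weyl operators compose up to a phase, $W(\omega)W(\omega') = c(\omega,\omega')W(\omega+\omega')$, conjugation kills the phase and $T_{\omega'}\circ T_\omega = T_{\omega+\omega'}$. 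Using translation invariance of the Haar measure $d\omega$ one obtains $T_{\omega'}\bigl(F_\sigma(A)\bigr) = \int_A d\omega\,T_{\omega+\omega'}(\sigma) = F_\sigma(A+\omega')$, which is the asserted covariance (up to the sign convention one chooses when shifting arguments versus measures).

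The converse---that every covariant $F$ is of the form $F_\sigma$ for some unique $\sigma$---is the heart of the statement. The strategy is to extract a candidate ``density at the origin'' $\sigma$ from $F$, and then check that $F(A) = \int_A d\omega\,T_\omega(\sigma)$. Pairing with a trace-class $\rho$ produces a complex measure $\mu_\rho(A) = \tr\bigl(\rho\, F(A)\bigr)$, and covariance of $F$ translates into $\mu_\rho(A+\omega) = \mu_{T_{-\omega}(\rho)}(A)$. First I would establish absolute continuity of $\mu_\rho$ with respect to Haar measure $d\omega$: any singular part would, by covariance, spread out along a whole orbit in $\Omega$, which is incompatible with the total mass constraint $F(\Omega)=\idty$ together with the finite pairings delivered by \eqref{sqint}. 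Given absolute continuity, $F$ admits an operator-valued density $f(\omega)$, and covariance forces $f(\omega) = T_\omega(f(0))$; setting $\sigma := f(0)$ yields $F(A) = \int_A d\omega\,T_\omega(\sigma)$. Positivity of $F$ together with $F(\Omega)=\idty$ and \eqref{sqint} then force $\sigma\geq 0$ and $\tr\sigma=1$. Uniqueness amounts to injectivity of $\sigma\mapsto F_\sigma$, which follows from irreducibility of the Weyl representation combined with \eqref{sqint}.

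The main obstacle is the absolute continuity step in the operator-valued setting; one must invoke a Radon--Nikodym type theorem appropriate to POVMs. The cleanest route is to reduce to scalar measures $\mu_\rho$ for a dense set of trace-class $\rho$ and then assemble the operator density using Cauchy--Schwarz bounds provided by \eqref{sqint}. This is precisely the step handled in Ref.~\onlinecite{BLWjmp} (building on Ref.~\onlinecite{QHA}) for the $\R\times\R$ phase space, and the argument transfers essentially verbatim once the Weyl cocycle and square integrability are in place for $\Omega=\T\times\Z$, which they are by \eqref{Weyl} and \eqref{sqint}.
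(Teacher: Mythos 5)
You should first note that the paper itself offers no proof of this proposition: it is stated as a summary of Ref.~\cite{QHA} (with the extension from $\R^n\times\R^n$ to $\Omega=\T\times\Z$ explicitly deferred to work in preparation), so the comparison is against the standard argument rather than an in-paper one. Your direct part is correct and complete: positivity is immediate, normalization follows from \eqref{sqint} exactly as you write, and the covariance computation via $T_{\omega'}\circ T_\omega=T_{\omega+\omega'}$ plus translation invariance of $d\omega$ is fine; the discrepancy between $F(A+\omega)$ and $F(A-\omega)$ is indeed only the Schr\"odinger- versus Heisenberg-picture convention for $T_\omega$. The uniqueness argument is also unproblematic.

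The one genuine gap is in the converse, at the step you yourself flag as the obstacle. The heuristic that ``any singular part would, by covariance, spread out along a whole orbit'' does not work as stated: covariance relates $\mu_\rho(A+\omega)$ to $\mu_{T_{-\omega}(\rho)}(A)$, i.e.\ to the measure of a \emph{different} state, so nothing in mass counting alone forbids a covariant family in which each individual $\mu_\rho$ is singular, each concentrated on its own translate of a Haar-null set; $F(\Omega)=\idty$ gives no contradiction by itself. What actually excludes this is irreducibility of the Weyl system: form the operator-valued Fourier transform $\hat F(\eta)=\int \overline{\chi_\eta(\omega)}\,F(d\omega)$ over characters $\eta$ of $\Omega$; covariance forces $W(\omega)^*\hat F(\eta)W(\omega)=\overline{\chi_\eta(\omega)}\,\hat F(\eta)$, which is the same intertwining relation satisfied by the Weyl operator labelled by $\eta$ under the duality $\hat\Omega\cong\Omega$, so by irreducibility $\hat F(\eta)$ is a scalar multiple of that Weyl operator; positivity and normalization then identify the scalar function as the Weyl transform of a density operator $\sigma$, and \eqref{sqint} converts this back into $F=F_\sigma$. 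Since you explicitly delegate this step to Refs.~\cite{QHA,BLWjmp}, where it is carried out, the proposal stands as an outline, but the orbit heuristic should be dropped rather than offered as the reason absolute continuity holds.
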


For the discussion of uncertainties we need the margins of such observables. One can guess their form from a fruitful classical analogy \cite{QHA}, by which the integrand in \eqref{sqint} can be read as a convolution of $\rho$ and $\sigma$. For classical probability densities on a cartesian product it is easily checked that the margin of the convolution is the convolution of the margins. The same is true for operators, only that the margins of a density operator are  the classical distributions $\rho\mby\Theta$ and $\rho\mby N$ described above.
If $p$ is a probability density on phase space we will also denote by $p\mby\Theta$ and $p\mby N$ the respective margins on $\T$ and $\Z$. In particular, for the output distribution of the covariant observable
$F_\sigma$, i.e., $p(\theta,n)=\tr\rho W(\theta,n)^*\sigma W(\theta,n)$, One checks readily checks the marginal relations
\begin{align}
  p\mby N&= \rho\mby N \ast \sigma\mby N,  \\
  p\mby\Theta&= \rho\mby\Theta \ast \sigma\mby\Theta,
\end{align}
where ``$\ast$'' means the convolution of probability densities on $\Z$ and $\T$. Since this is the operation associated with the sum of independent random variables we arrive at the following principle:
\begin{cor}
  {Both margins of a covariant phase space measurement $F_\sigma$ can be simulated by first making the corresponding ideal measurement on the input state, and then adding some independent noise, which is also independent of the input state. The distribution of this noise is the corresponding margin of the  density operator $\sigma$}.
\end{cor}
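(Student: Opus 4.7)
The plan is to establish the two convolution identities $p\mby N = \rho\mby N\ast\sigma\mby N$ and $p\mby\Theta = \rho\mby\Theta\ast\sigma\mby\Theta$ that appear just before the corollary, since once these are in hand the content of the corollary is immediate from the standard fact on any locally compact abelian group that a convolution of probability measures is precisely the distribution of the sum of two independent random variables with those marginals. Applied here, this identifies the $F_\sigma$ outcome as the independent sum of the ideal (projection-valued) measurement on $\rho$ and an internal noise whose distribution is the corresponding margin of $\sigma$---and that margin is manifestly independent of the input state $\rho$.

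For the number margin I would compute matrix elements directly in the Fourier basis $\{e_m\}$. A straightforward calculation from the explicit form of the Weyl operators gives $W(\theta,n)\,e_m = e^{i(m-n/2)\theta}\,e_{m+n}$, so the entire $\theta$-dependence of the matrix elements
\[
\langle e_{m'}|\,W(\theta,n)^*\sigma W(\theta,n)\,|e_m\rangle \,=\, e^{i(m-m')\theta}\,\langle e_{m'+n}|\,\sigma\,|e_{m+n}\rangle
\]
sits in a pure character. Integrating against $d\theta/(2\pi)$ then annihilates all off-diagonal terms by character orthogonality on $\T$, and what remains is $p\mby N(n)=\sum_m \rho\mby N(m)\,\sigma\mby N(m+n)$. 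This is the claimed convolution on $\Z$; any apparent reflection of $\sigma\mby N$ is absorbed into the sign convention that identifies the noise as added to, rather than subtracted from, the ideal outcome.

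The angle margin is obtained by the Pontryagin-dual computation in the angle representation. There $W(\theta,n)$ acts as translation by $\theta$ composed with multiplication by the character $x\mapsto e^{-inx}$, and the only $n$-dependence of the integral kernel of $W(\theta,n)^*\sigma W(\theta,n)$ is once again a pure character, of the form $e^{in(x-x')}$. Summing over $n\in\Z$ yields the Dirac comb $2\pi\,\delta_\T(x-x')$, which localises $x=x'$ in the integration against $\rho$'s kernel and produces $p\mby\Theta(\theta)=(\rho\mby\Theta\ast\sigma\mby\Theta)(\theta)$. To avoid any distributional qualms, one can equivalently test both sides against a smooth function on $\T$ and interchange sum and integral via Fubini, justified because the trace-class bounds control everything absolutely.

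The proof is essentially a bookkeeping exercise. The only step requiring mild care is the Fourier-type identity---character orthogonality on $\Z$ for the number margin and its dual on $\T$ for the angle margin---together with the interchange of sum, integral and trace. I do not foresee any genuine obstacle beyond keeping track of the phase conventions, which are chosen precisely so that the two convolutions come out on the nose; the statement about the noise then reduces to the elementary probabilistic translation of ``convolution of measures'' as ``distribution of an independent sum''.
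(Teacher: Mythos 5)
Your proposal is correct and follows essentially the same route as the paper, which simply asserts the marginal convolution identities ("one readily checks") via the quantum-harmonic-analysis convolution picture and then reads off the corollary as the standard ``convolution of probability measures $=$ distribution of an independent sum'' fact; your matrix-element and kernel computations are exactly the verification the paper omits. The only point worth flagging is the one you already noted: with the convention $T_\omega(\sigma)=W(\omega)^*\sigma W(\omega)$ the noise distribution comes out as the margin of the parity transform of $\sigma$ rather than of $\sigma$ itself, which is immaterial here since the metrics are reflection-invariant and one optimizes over all $\sigma$ anyway.
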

This principle is responsible for the remarkable equality of the preparation uncertainty region $\PU^+$ and the measurement uncertainty regions $\MU^+$ and $\CU^+$. Indeed the added noise is what distinguishes the margins of an attempted joint measurement from an ideal measurement, and this has precisely the distribution relevant for preparation uncertainty for $\sigma$.

\subsection{$\MU$ and $\CU$: Reduction to the covariant case}
While this principle explains quite well what happens in the case of covariant observables $F_\sigma$, Proposition~\ref{mainthm} makes no covariance assumption. The key for reducing the general case is the observation that our quality criteria in terms of $d(\Theta',\Theta)$  do not single out a point in phase space. Thus, let  $\Mee$ be the set of observables whose angle margin $F_\T$ is $\veps_1$-close to the ideal observable, $d(F_\T,\Theta)\leq\veps_1$, and whose \numb margin satisfies $d(F_\Z,N)\leq\veps_2$; then this set  is closed under phase space shifts, i.e., is unchanged when we replace $F$ by $F'$ where
\begin{equation}\label{Fshift}
  F'(A)=T_\omega(F(A+\omega))
\end{equation}
Note that the fixed points of all these transformations are precisely the covariant observables. The second point to note is that the set
$\Mee$ is convex, because the  worst case error of an average in a convex combination is smaller than the average of the worst case errors. It is also a compact set in a suitable weak topology. This is in contrast to the set of all observables: Since there are arbitrarily large shifts in $\Z$, we can shift an observable to infinity such that the probabilities for all fixed finite regions go to zero. The weak limit of such observables would be zero or, in an alternative formulation, would acquire some weight on points at infinity (a compactification of $\Z$). For such a sequence, however, the errors would also diverge. It is shown in Ref.~\onlinecite{BLWjmp} that this suffices to ensure the compactness of $\Mee$. Then the Markov-Kakutani Fixed Point Theorem (Ref.~\onlinecite[Thm.~V.10.6]{Markak}) ensures that $\Mee$ contains a common fixed point of all the transformations \eqref{Fshift}.

In summary:
\begin{proposition}
 For every joint observable $F$ on the phase space $\Omega$ there is a covariant one for which the errors are at least as good. 
 \end{proposition}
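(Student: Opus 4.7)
The plan is to fix any joint observable $F$ with margin errors $\veps_1 = d(F_\T,\Theta)$ and $\veps_2 = d(F_\Z,N)$, form the set $\Mee$ of all joint observables meeting these same error bounds, and apply the Markov-Kakutani fixed point theorem (Ref.~\onlinecite[Thm.~V.10.6]{Markak}) to the commuting family of phase space shift maps \eqref{Fshift} restricted to $\Mee$. A simultaneous fixed point of these maps is a covariant observable, as recalled in the excerpt, and it lies in $\Mee$ by construction, so its angle- and number-margin errors are at most $\veps_1$ and $\veps_2$, meaning it is at least as good as $F$.

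For Markov-Kakutani I need $\Mee$ to be a nonempty compact convex set in a suitable locally convex space, and the shift maps to be a commuting family of continuous affine self-maps of $\Mee$. Commutativity of the shifts is inherited from the abelian group $\Omega=\T\times\Z$, and they are affine in $F$ by inspection of \eqref{Fshift}. Invariance of $\Mee$ under the shifts is the easy step: translation invariance of both metrics on $\T$ and $\Z$ together with the covariance of $\Theta$ and $N$ imply that shifting $F$ in phase space shifts its margins and the reference distributions by the same amount, so each transport distance \eqref{dAA} is preserved. Convexity of $\Mee$ is likewise straightforward, since \eqref{dAA} is a convex function of $F$: convex combinations of couplings of the individual margins yield couplings of the convex-combined margins whose integrated cost is bounded by the convex combination of the individual costs.

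The hard part will be compactness of $\Mee$ in the relevant weak topology. Without a bound on errors the full set of joint observables is \emph{not} compact: because the number translation group $\Z$ is non-compact, one can shift an observable arbitrarily far in $n$, sending the probability of every bounded outcome region to zero, and weak limits then fail to be probability valued (mass escapes to a compactification point at infinity). A finite bound $d(F_\Z,N)\leq\veps_2$ rules this out — on any calibration state sharply localized at a fixed integer, the $F_\Z$-distribution must stay concentrated near that integer in the transport sense, and this yields the tightness needed to prevent escape of mass. The detailed tightness-and-compactness argument is the one carried out for standard position and momentum in Ref.~\onlinecite{BLWjmp}, and it transfers without essential change since our phase space shares the decisive structural features (an abelian product of a compact factor with a discrete one, both with translation invariant distance measures). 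Once compactness is in hand, Markov-Kakutani delivers the common fixed point, which is covariant and better than $F$ in both errors, completing the proof.
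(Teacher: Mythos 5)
Your proposal is correct and follows essentially the same route as the paper: define the error-bounded set $\Mee$, observe that it is convex, shift-invariant and compact (with compactness deferred to the position--momentum argument of Ref.~\onlinecite{BLWjmp}, the boundedness of the errors being exactly what prevents escape of mass to infinity along $\Z$), and apply the Markov--Kakutani fixed point theorem to the commuting affine shift maps to obtain a covariant observable in $\Mee$. The only difference is that you spell out the convexity step at the level of couplings, which the paper states more briefly.
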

Therefore for determining $\MU^+$ we can just assume the observable in question to be covariant, implying the very simple form of the margins described above. 
The argument for $\CU^+$ is the same.

\subsection{The post-processing Lemma: $\CU^+=\MU^+$}
We have noted that, in general, $\Delta_\alpha^c(A',A)\leq d_\alpha(A',A)$, because for calibration the worst case analysis is done over a much smaller set of states. Indeed, it is easy to construct examples of observable pairs where the inequality is strict.  There is a general result, Ref.~\onlinecite[Lemma~8]{AMU}, however, which implies equality. The condition is that $A'$ arises from $A$ by classical, possibly stochastic post-processing. That is, we can simulate $A'$ by first measuring $A$, and then adding noise or, in other words, generating a random output by a process which may depend on the measured $A$-value. The noise is described then by a transition probability kernel $P(x,dy)$ for turning the measured value $x$ into somewhere in the set $dy\subset X$. $P$ is thus a description of the noise, and its relevant size is given by the formula
\begin{eqnarray}\label{postpro}
 \Delta_\alpha^c(A',A)&=& d_\alpha(A',A)
                      = \left(\essential_A\text{-}\sup_{x\in X}\int P(x,dy)\ d(x,y)^\alpha \right)^{1/\alpha}  .
\end{eqnarray}
Here the $A$-essential supremum of a measurable function is the supremum of all $\lambda$ such that the level set $\{x|f(x)\geq\lambda\}$ has non-zero measure with respect to $A$. This is needed to ensure that $P(x,dy)$ enters this formula only for values $x$ that can actually occur as outputs of $A$.

\subsection{The covariant case: $\CU^+=\MU^+=\PU^+$}
In the case at hand, the noise is independent of $x$, i.e., $P$ is translation invariant and so is the metric. Therefore, the integral in \eqref{postpro} is simply independent of $x$. Moreover, we know the distribution of the noise on each margin to be the respective margin of $\sigma$, so that the integral is just the $\alpha$-power deviation of the margin from zero. So we get, for any choice of exponents and translation invariant metrics on $\T$ and $\Z$:
\begin{equation}\label{3delta}
 \Delta_\alpha^c(\Theta',\Theta)= d_\alpha(\Theta',\Theta) = d_\alpha(\sigma\mby\Theta,0),
\end{equation}
and similarly for $N$. Note that the last term here is {\it not} the variance $d_\alpha(\sigma\mby\Theta,*)$, because the minimization over $x$ in \eqref{stdev} is missing. Indeed, if $\sigma\mby\Theta$ just had zero variance, i.e, it were a point measure at some point $x\neq0$, we would get a constant shift of size $d(x)$ between the distributions $\rho\mby{\Theta'}$ and $\rho\mby\Theta$, and this would be the errors on the left hand side. So for a fixed $\sigma$ we can only say that the terms in \eqref{3delta} are $\geq d_\alpha(\sigma\mby\Theta,*)$. On the other hand, we are looking for {\it optimal} $\sigma$ and these will be obtained by shifting $\sigma$ in such a way that   \eqref{3delta} is minimized. Hence, as far as uncertainty diagrams are concerned, we can replace the last term by the $\alpha$-deviation. This concludes the proof that the three uncertainty diagrams coincide.

\subsection{Minimizing the variances}\label{sec:minvariance}
We now describe the general method to find the tradeoff curve. The idea is to fix some negative slope $-t$ in the diagram, and ask for the lowest straight line with that slope intersecting $\PU^+$. That is, we look at the optimal lower bound $c()$ such that
\begin{equation}\label{leglower}
  y+t x\geq c(t)\qquad \text{for all}\ (x,y)\in\PU.
\end{equation}
Now both coordinates $x=d_\alpha(\sigma\mby\Theta,0)^\alpha$ and $y=d_\beta(\sigma\mby N,0)^\beta$ are linear functions of $\sigma$, so that the left hand side of \eqref{leglower} is just the $\sigma$-expectation of some operator, namely
\begin{eqnarray}
  y+t x&=& \tr(\sigma H(t))\nonumber\\
          H(t)&=& d_\Z(N)^\beta + t d_\T(\Theta)^\alpha .\label{Ht}
\end{eqnarray}
Here $d_\T$ and $d_\Z$ are the metrics chosen for these spaces, and we used the notation of writing $f(\Theta)$ for the multiplication operator by $f(\theta)$, and its Fourier transformed counterpart, and also the convention that $d(x)=d(x,0)$ for a translation invariant metric. The optimal constant $c(t)$ is thus the lowest expectation $\inf H=\inf\tr(\sigma H)$, i.e., its {\it ground state energy}. Note that for standard position and momentum phase space and $\alpha=\beta=2$ we get here $H=P^2+t Q^2$, a harmonic oscillator, and the well-known connection between its ground state and minimum uncertainty.

We will look into these ground state problems later and for now note some general features.
\begin{enumerate}
\item[(i)] The variable $t$ is positive because we are looking for lower bounds on $x$ and $y$ only. This corresponds in part to taking the monotone closure, and is the reason why we replace $\PU^+$ by $\PU$ in \eqref{leglower}.
\item[(ii)] The best bound on $\PU$ obtained in this way is achieved by optimizing over $t$, i.e., $$y\geq\sup_t\{-tx+c(t)\},$$ the Legendre transform of $c$. This is automatically convex. In other words, the method does not describe $\PU^+$ in general, but its convex hull (the intersection of all half spaces with positive normal containing $\PU$).
\item[(iii)] There may be points on the tradeoff curve for the convex hull which do not really correspond to a realizable pair of uncertainties. However, if we take the collection of ($t$-dependent) ground states, and their variance pairs trace out a continuous curve, we know that the tradeoff curves are the same and the set $\PU^+$ is actually convex and fully characterized by the ground state method.
\end{enumerate}

\begin{figure}[ht]
\includegraphics[width=0.4\textwidth]{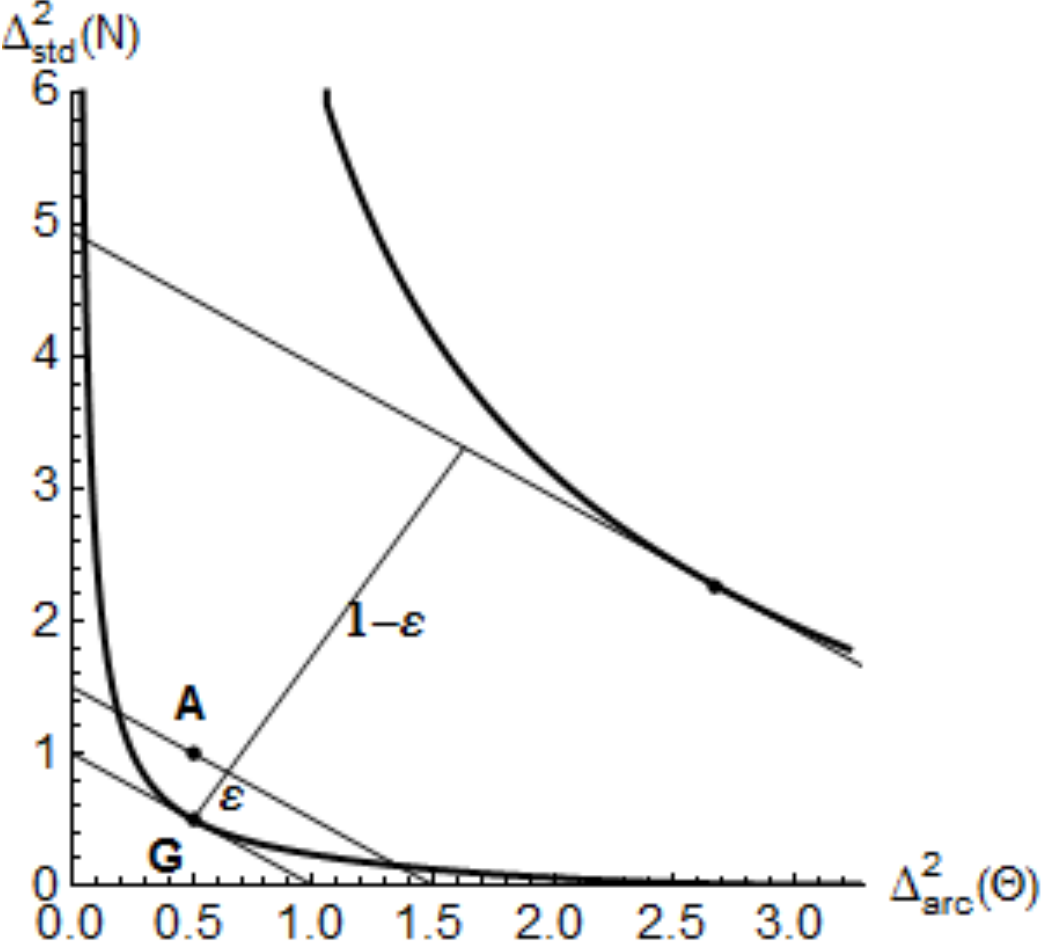}
\caption{Applying the estimate \eqref{nearmin} in an uncertainty diagram. The lower curve is the tradeoff curve obtained from the ground state problem. The upper curve is formed by the uncertainty pairs of the respective first excited states. Suppose the uncertainty pair {\sf A} has been found. We want to show that the corresponding state must be close to the minimal uncertainty state corresponding the point {\sf G}. Drawing the tangent to the tradeoff curve at {\sf G} and the parallel tangent to the upper curve we find from the diagram that the fidelity of the given state to {\sf G} must be at least $1-\varepsilon$.}
\label{fig:nearmin}
\end{figure}

When the ground state problem for $H(t)$ has a gap, it is known that any state with expectations close to the ground state energy must actually be close to the ground state. More precisely, suppose that $H$ has a unique ground state vector, $H\psi=E_0\psi$, and that the next largest eigenvalue is $E_1>E_0$. Then $H\geq E_1\idty- (E_1-E_0)\kettbra\psi$. Now let $E_0<\braket{\phi}{H\phi}=E_\phi<E_1$ for some unit vector $\phi$.  Then by taking the $\phi$-expectation of the operator inequality, we get
\begin{equation}\label{nearmin}
  \abs{\braket\phi\psi}^2\geq \frac{E_1-E_\phi}{E_1-E_0}. 
\end{equation}
In particular, when $E_\phi\approx E_0$, $\phi$ must be close in norm to $\psi$. We can directly apply this principle to the above ground state problems. The basic geometry is described in Fig.~\ref{fig:nearmin}. This shows that the curve of minimizers is continuous. It will also be useful in showing explicitly that the minimizers for different choices of metrics are sometimes quite close to each other, or that some simple ansatz for the minimizer is quantitatively good.

\section{Numerics in truncated Fourier basis}\label{sec:numerics}

Here we only consider the case $d_\Z=\dst$ because for the discrete metric on $\Z$ the ground state problem has an elementary explicit solution (see Section~\ref{sec:discrete}).
The numerical treatment is easiest in the Fourier basis, or rather in the even eigenspace of the \numb operator $N$, for $\abs N\leq\nmax$. The matrix elements of the relevant Hamiltonians for basis vectors in this range can be written down as simple explicit expressions. From these the numerical version of the Hamiltonian is determined as floating point matrix of the desired precision, for which the ground state and first excited state are determined by standard algorithms. All these steps were carried out in Mathematica. The criterion for the choice of $\nmax$ was that the highest-$n$ components of the eigenvectors found should be negligible at the target accuracy. The target accuracy was mostly $5$ digits with computations done in machine precision with $\nmax=80$, but was chosen larger for getting a reliable estimate of the separation of the different state families.

All computations must be considered elementary and highly efficient, even at high accuracy. None of the diagrams in this paper takes computation time longer than a keystroke. It is therefore hardly of numerical advantage to implement the analytic  solutions of Section~\ref{sec:exact}, not in computation time and even less in programming and verification time.

Perhaps the only surprise in this problem is that for the two different metrics $\dar$ and $\dch$ the minimizing state families are so close. Since $V_\mch\leq V_\mar$,  the ground state problems for $H_\mch$ and $H_\mar$ are related. Perturbatively one sees that the ground state energies are indeed similar, up to the expectation of $t(V_\mar(\theta)-V_\mch(\theta))$. The stability statement at the end of Section~\ref{sec:minvariance} then implies that the corresponding ground states are also similar. However, direct comparison gives a norm bound, which is rather better than these arguments indicate:
\begin{equation}\label{normdiff}
  \norm{\psi_\mar(t)-\psi_\mch(t)}\leq 0.145
\end{equation}
for all $t$, corresponding to a fidelity $\geq .98$. This still does not quite reflect the similarity of these two state families: When we allow the $t$-arguments to differ, we get a much better approximation. To make this precise consider the orbits $\Omega_\mar=\{e^{i\alpha}\psi_\mar(t)| t>0, \alpha\in\R\}$, and an analogously defined $\Omega_\mch$. For sets in Hilbert space we use the Hausdorff metric, so that $d_H(\Omega_1,\Omega_2)<\veps$ means that for every point in one set there is an $\veps$-close one in the other. Then one easily gets
\begin{equation}\label{dHOm}
  d_H\bigl(\Omega_\mar,\Omega_\mch\bigr)\leq .028\ 
\end{equation}

Consequently, there is really only one diagram representing the family of minimal uncertainty states, which we show in Fig.~\ref{fig:states}.

\begin{figure}[h]
\includegraphics[width=0.45\textwidth]{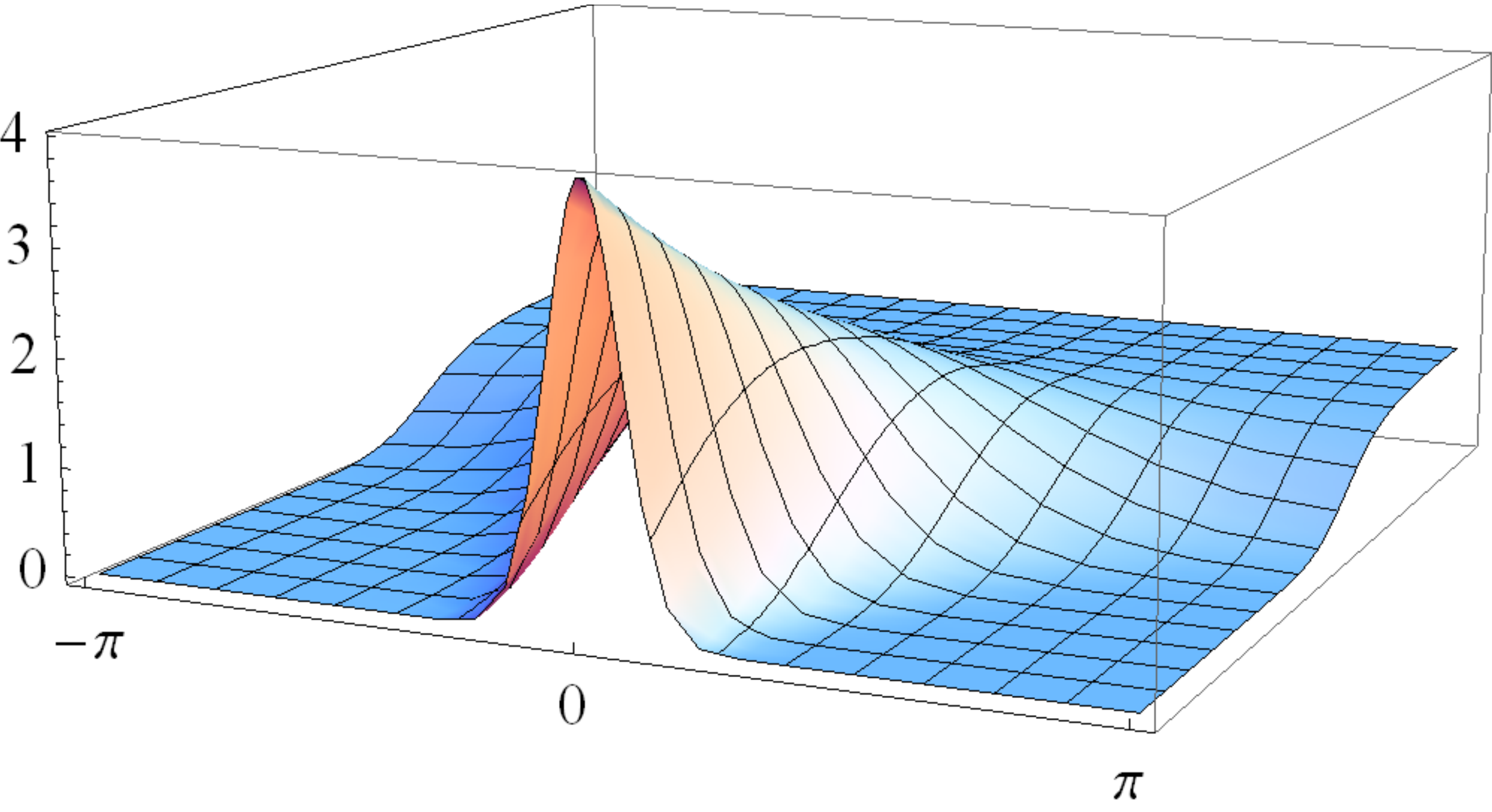}
\includegraphics[width=0.45\textwidth]{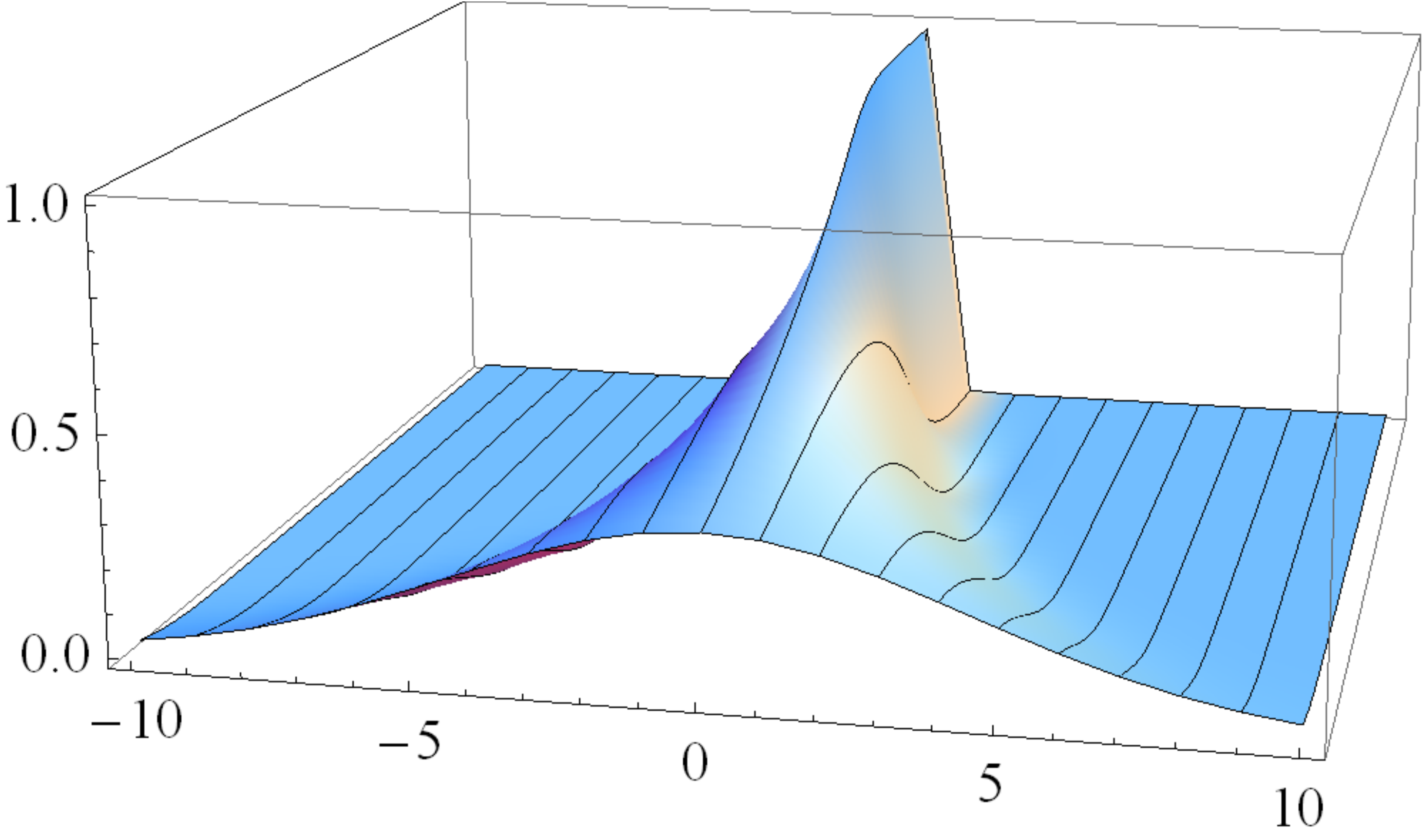}
\caption{Minimum uncertainty wave functions (left) and Fourier coefficients (right) for the case $(\dch,\dst)$ as a function of the family parameter $t$ from \eqref{Ht}, plotted as the depth dimension.
Note that the surface in the diagram on the right is only an aid for better 3D visualization --- only the embedded lines have real significance.}
\label{fig:states}
\end{figure}

\section{Exact ground states}\label{sec:exact}
\subsection{Schr\"odinger operator case}
With $d_{\mathbb Z}=d_{\rm std}$ and $\beta=2$, the ground state problem becomes an instance of the Schr\"odinger operator eigenvalue problem. In fact, writing $V=d_{\mathbb Z}^\alpha(\Theta)$, the optimal constant $c(t)$ for a given $t$ is the smallest value of $\lambda$ such that the differential equation
\begin{equation}\label{diffeq}
-\psi''(\theta) + tV (\theta)\psi(\theta) = \lambda \psi(\theta)
\end{equation}
has a solution on $[-\pi, \pi]$ satisfying the boundary conditions $\psi(-\pi) = \psi(\pi) = \psi'(-\pi) = \psi(\pi) = 0$. By the general theory, we know that the (unique) solution $\psi = \psi_\alpha$ has no zeros, can be chosen to depend smoothly on $t$ (by perturbation theory), and can be chosen to be positive and even (by parity invariance).

Hence, we are in fact looking for an even solution $\psi$ of \eqref{diffeq} with $\psi'(\pi)=0$. At $t = 0$ this vector is just $\psi_\alpha(\theta) = 1/\sqrt{2\pi}$, i.e. a constant. For the two choices $\dar$ and $\dch$, the solutions $\psi_\alpha$ are known special functions; we now proceed to describe them in some detail.

\subsubsection{$V_{\mathbb Z}(\theta)=\dar(\theta)^2=\theta^2$}
The general even solution of \eqref{diffeq} is given by a hypergeometric function \cite{abramowitz}:
\begin{equation}\label{arcpsi}
\psi(\theta)=N^{-1}e^{-\frac{1}{2}\sqrt{t }\,\theta^2}  
             \hypoo\left(\frac{1}{4} \left(1-\frac{\lambda}{\sqrt{t }}\right);\frac{1}{2};\sqrt{t } \theta^2\right),
\end{equation}
where $N$ is the normalisation factor. The boundary condition $\psi'(\pi)=0$ now picks out the eigenvalues $\lambda$ for every $t$ (see Fig.~\ref{fig:arcpsiprime}), of which the smallest is the desired constant $c(t)$. The condition can be expressed by using the standard differentiation formulas for the hypergeometric functions:
\begin{align*}
\left(1-\frac{\lambda}{\sqrt{t }}\right) &\, \hypoo\left(1+\frac{1}{4} \left(1-\frac{\lambda}{\sqrt{t }}\right);\frac{3}{2};\pi ^2 \sqrt{t }\right)
=\hypoo\left(\frac{1}{4} \left(1-\frac{\lambda}{\sqrt{t }}\right);\frac{1}{2};\pi ^2 \sqrt{t }\right)
\end{align*}
However, as far as we could see, the theory of hypergeometric functions seems to offer little help for solving it, or for evaluating the normalization constant $N$ or the Fourier coefficients. Perhaps an elementary expression for $c(t)$ is too much to hope for, since already in much simpler problems, e.g., a particle in a box, where the pertinent transcendental equations involve only trigonometric and linear functions, no ``explicit'' solution can be given either.
\begin{figure}[h]
\includegraphics[width=0.4\textwidth]{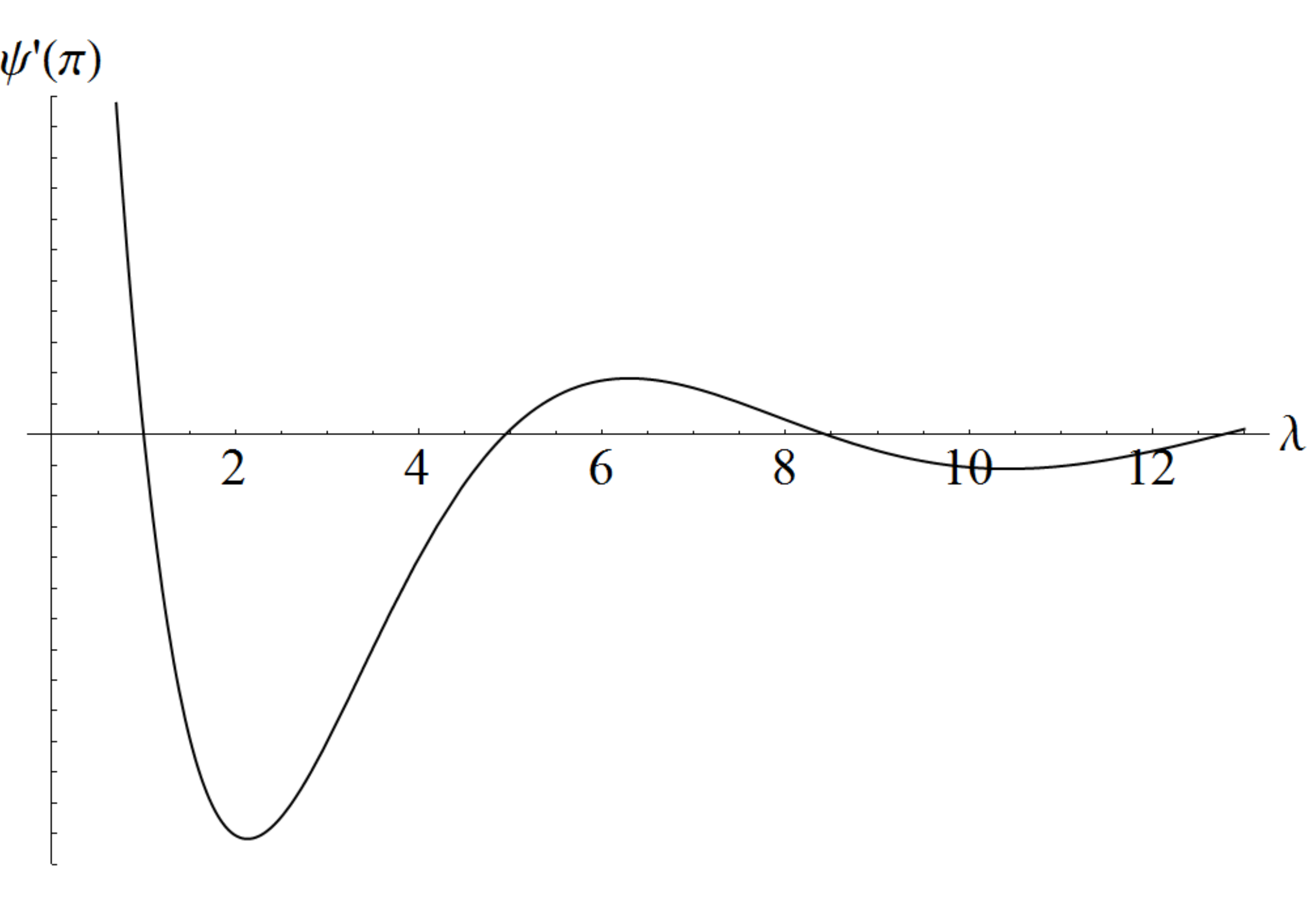}
\caption{The derivative $\psi'(\pi)$ after \eqref{arcpsi}, for $t=1$, as a function of the eigenvalue parameter $\lambda$. The zeros of this function determine the eigenvalues, of which the lowest gives the constant $c(1)$.}
\label{fig:arcpsiprime}
\end{figure}

\subsubsection{$V(\theta)=\dch(\theta)^2=2(1-\cos\theta)$}
In the case of $\dch$ (distance through the circle), the equation \eqref{diffeq} is just the Mathieu equation up to scaling $\theta\mapsto \theta/2$. The even periodic solutions correspond to $-4(2t-\lambda) ={\mathrm a}_r(-4t)$, where the ${\mathrm a}_r(q)$, $r=0,1,\ldots$ are called Mathieu characteristic values \cite{abramowitz}. Our ground state eigenvalues are therefore
$$
c(t) =2t +a_t/4,
$$
where we have used the shorthand $a_t={\mathrm a}_0(-4t)$. Since ${\mathrm a}_0$ is implemented in e.g. Mathematica, we can easily determine the values numerically. The corresponding solutions are given by
$$
\psi_t(\theta)=\frac{1}{N} \matze\left(a_t;-4 t ;\frac{\theta}{2}\right)
$$
where $\matze$ denotes the lowest order first kind solution of the ordinary Mathieu equation, and $N$ is again the normalisation factor. We note that the Fourier coefficients $\widehat\psi_t$ are explicit functions of $a_t$ and $t$, determined by the recurrence relations. Up to second order, we have
\begin{align*}
\psi_t(\theta)&=\frac 1N \Big(1-\frac {a_t}{4t}\cos\theta +\left[\frac{(a_t-4)a_t}{16t^2}+8t\right]\cos 2\theta
+O(\cos 3\theta)\, \Big)
\end{align*}
The relevance of the Mathieu functions in the context of circular uncertainty relations has been noted e.g. in Ref.~\onlinecite{rehacek}.

\subsection{Discrete metric case}\label{sec:discrete}
With $d_{\mathbb Z}=\ddi$, we have the eigenvalue equation
\begin{equation}\label{discrev}
  \bigl(\id -|\phi_0\rangle\langle \phi_0| +t V\bigr)\psi =\lambda \psi,
\end{equation}
with $V$ as in the previous section, and $\phi_0(\theta)=1/\sqrt{2\pi}$ the constant function. This allows us to solve for  $\psi$:
\begin{equation}\label{psidisc}
  \psi_t(\theta) = \frac A{1-\lambda+t V(\theta)}\ ,
\end{equation}
where $A>0$ is  the normalization constant. Inserting this into \eqref{discrev} gives the consistency condition
\begin{equation}\label{disconcis}
  \frac{1}{2\pi}\int_{-\pi}^\pi \frac{d\theta}{1-\lambda +t V(\theta)}=1.
\end{equation}
The smallest positive solution $\lambda=c(t)$ of this equation will give us the desired bound.

However, we can also proceed more directly by using just the functional form \eqref{psidisc}, which we can further simplify to the one-parameter family
$\psi(\theta)=A/(\mu+ V(\theta))$, with a single parameter $\mu$. We then have to solve three integrals:
\begin{align*}
 I_1(\mu)&=\displaystyle\int_{-\pi}^\pi \frac{d\theta}{\mu + V(\theta)} \\
 I_2(\mu)&=\displaystyle\int_{-\pi}^\pi \frac{d\theta}{\bigl(\mu + V(\theta)\bigr)^2}            =\displaystyle-\frac{dI_1(\mu)}{d\mu}\\
 I_3(\mu)&=\displaystyle\int_{-\pi}^\pi \frac{d\theta\ V(\theta)}{\bigl(\mu + V(\theta)\bigr)^2} =\displaystyle I_1(\mu)-\mu I_2(\mu).
\end{align*}
Then the pair of variances
\begin{eqnarray}\label{gendis}
  \Delta_\mdi(N)&=& 1-\frac{I_1(\mu)^2}{2\pi I_2(\mu)} \\
  \Delta^2(\Theta)&=& \frac{I_3(\mu)}{I_2(\mu)}
\end{eqnarray}
lies on the tradeoff curve. One can check that this is consistent with the Legendre transform picture, i.e., condition \eqref{disconcis} in the form
\begin{equation}\label{disconcis1}
  I_1\left(\frac{1-c(t)}t\right)=2\pi t
\end{equation}
and its derivative and the parameter identification $\mu=(1-c)/t$.

Now for the arc metric we have $V(\theta)=\theta^2$ and
\begin{equation}\label{I1arc}
  I_1(\mu)=\frac2{\sqrt\mu}\arctan\bigl(\frac\pi{\sqrt\mu}\bigr).
\end{equation}
Trying to eliminate $\mu$ from \eqref{gendis} leads to a transcendental equation, so one cannot give a closed inequality involving just the variances.

For the chord metric we have
$V(\theta)=2(1-\cos\theta)$ and
\begin{equation}\label{I1cho}
  I_1(\mu)=\frac{2\pi}{\sqrt{\mu(\mu+4)}}.
\end{equation}
In this case one can easily eliminate $\mu$ from \eqref{gendis}, and the tradeoff is explicitly described by equation \eqref{dischord}.

\section{Analytic lower bounds}\label{sec:lower}
In this section we establish a variational method for proving uncertainty relations by applying such bounds for the ground state problem. Of course, variational methods for the ground state problem are well-known. Basically they amount to choosing some good trial state, and evaluating the energy expectation: This will be an upper bound on the ground state energy, and it will be a good one if we have guessed well. However, it is notoriously difficult to find lower bounds on the ground state energy.  The idea for finding such bounds is via the following Lemma:

\begin{lemma}\label{boundlemma}Let $V$ be a $2\pi$-periodic real valued potential, and $H$ the Schr\"odinger operator $H\psi=-\psi''+V\psi$ with ground state energy $E_0(H)$. Consider a twice differentiable periodic function $\phi$, which is everywhere $>0$. Then the ground state energy of $H$ is larger or equal to
\begin{equation}\label{elow}
    E_\phi=\min_\theta\left\{V(\theta)-\frac{\phi''(\theta)}{\phi(\theta)}\right\}.
\end{equation}
\end{lemma}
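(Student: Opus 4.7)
My plan is to prove this via a ground state substitution (the ``Barta inequality'' trick): parameterise an arbitrary $\psi$ as $\psi=\phi\chi$, where $\phi>0$ is the trial function in the statement, and compute the quadratic form $\langle\psi,H\psi\rangle$ in the new variable. Since $\phi$ is smooth, bounded away from zero, and $2\pi$-periodic, this substitution is a well-defined bijection on the natural domain of $H$, so it suffices to establish the inequality $\langle\psi,H\psi\rangle\geq E_\phi\|\psi\|^2$ for all $\psi$ in a form domain.

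The core step is the identity
\begin{equation*}
\langle\psi,H\psi\rangle=\int_{-\pi}^{\pi}|\psi'(\theta)|^2\,d\theta+\int_{-\pi}^{\pi}V(\theta)|\psi(\theta)|^2\,d\theta,
\end{equation*}
combined with the decomposition $|\psi'|^2=(\phi')^2|\chi|^2+\phi\phi'(|\chi|^2)'+\phi^2|\chi'|^2$. I would then integrate the cross term $\int\phi\phi'(|\chi|^2)'\,d\theta$ by parts; periodicity of $\phi$ and $\chi$ kills the boundary contribution, producing $-\int((\phi')^2+\phi\phi'')|\chi|^2\,d\theta$. The $(\phi')^2|\chi|^2$ pieces cancel, and collecting the surviving terms yields the clean rearrangement
\begin{equation*}
\langle\psi,H\psi\rangle=\int_{-\pi}^{\pi}\phi(\theta)^2|\chi'(\theta)|^2\,d\theta+\int_{-\pi}^{\pi}\left(V(\theta)-\frac{\phi''(\theta)}{\phi(\theta)}\right)|\psi(\theta)|^2\,d\theta.
\end{equation*}

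The first term is manifestly non-negative since $\phi$ is real, and the pointwise bound $V-\phi''/\phi\geq E_\phi$ in the second term gives $\langle\psi,H\psi\rangle\geq E_\phi\|\psi\|^2$. Taking the infimum over unit vectors in the form domain delivers $E_0(H)\geq E_\phi$, which is the claim.

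The only real subtlety I foresee is verifying that the integration by parts is legitimate; this requires that $\chi$ (equivalently $\psi$) be regular enough for the boundary terms to cancel by periodicity. Restricting to $\psi$ in a dense core of smooth periodic functions (on which the quadratic form is uniquely determined) avoids the issue entirely, since $\phi$ is smooth and strictly positive so $\chi=\psi/\phi$ inherits the regularity and periodicity of $\psi$. No further analytic difficulties arise: once the identity above is in hand, the lemma is immediate.
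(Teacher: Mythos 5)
Your argument is correct, but it takes a genuinely different route from the paper. You prove the operator inequality $H\geq E_\phi\idty$ directly via the ground-state (Jacobi/Barta) representation of the quadratic form: substituting $\psi=\phi\chi$ and integrating the cross term by parts yields
\begin{equation*}
\langle\psi,H\psi\rangle=\int_{-\pi}^{\pi}\phi^2|\chi'|^2\,d\theta+\int_{-\pi}^{\pi}\Bigl(V-\frac{\phi''}{\phi}\Bigr)|\psi|^2\,d\theta\geq E_\phi\|\psi\|^2,
\end{equation*}
and your algebra checks out. The paper instead introduces the comparison potential $\widetilde V=\phi''/\phi+E_\phi$, observes that $\phi$ is a positive--hence nodeless, hence ground-state--eigenfunction of the corresponding operator $\widetilde H$ with eigenvalue $E_\phi$, and then uses $\widetilde V\leq V$ together with the Rayleigh--Ritz principle to conclude $E_\phi=E_0(\widetilde H)\leq E_0(H)$. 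The two arguments are close cousins (your identity is precisely what shows $\widetilde H\geq E_\phi\idty$ with equality at $\chi=\mathrm{const}$), but yours is more self-contained: it avoids invoking the Perron--Frobenius-type fact that a strictly positive eigenfunction of a Schr\"odinger operator must be the ground state, at the cost of the integration-by-parts bookkeeping you rightly flag. The paper's version is shorter granted that standard fact and makes the role of the comparison Hamiltonian explicit. One cosmetic point: the lemma only assumes $\phi$ twice differentiable, not smooth, but that is enough regularity for your substitution and for the boundary terms to cancel on a core of smooth periodic $\psi$.
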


\begin{proof}\let\wt\widetilde
Let
\begin{equation}\label{vv}
  \wt V(x)=\phi''(x)/\phi(x)+ E_\phi
\end{equation}
and $\wt H$ the Schr\"odinger operator with this potential.
Then $\phi$ is an eigenfunction of $\wt H$ with eigenvalue $E_\phi$, and since it was assumed to be positive, it has no nodes and must hence be the ground state eigenfunction. On the other hand, because $E_\phi\leq V(x)-(\wt V(x)-E_\phi)$, we have $\wt V\leq V$ and hence $\wt H\leq H$. By the Rayleigh-Ritz variational principle \cite{ReedS} this implies the ordering of the ground state energies, i.e., $\wt E=E_0(\wt H)\leq E_0(H)$.
\end{proof}

Finding a $\phi$ which gives a good bound is usually more demanding than finding a good $L^2$-approximant for the ground state, because of the highly discontinuous expression $\phi''/\phi$ and the infimum being taken over the whole interval. In particular, the approximate eigenvectors obtained by other methods may perform poorly, even give negative lower bounds on a manifestly positive operator.

The positivity of $\phi$ may be ensured by setting $\phi(\theta)=\exp f(\theta)$; then one has to minimize $V-(f''+(f')^2))$.

We now consider the combinations of metrics $(\dst,\dar)$ and $(\dst, \dch)$, also comparing the results with existing uncertainty relations found in the literature. This demonstrates how our systematic approach relates to many existing (seemingly ad hoc) uncertainty relations.

One remark should be made concerning the comparison with the literature: The uncertainty measure used for the \numb operator is practically always taken to be the usual standard deviation, which can be different from $\Delta_{\mst}(N)$ since in the latter case the infimum is taken only over the set of integers. In general we have
\[
\Delta_{\mst}(N)^2 \geq \sum_{n\in \mathbb Z} (n-{\rm tr}(\rho{N}))^2 \rho\mby{N}(n),
\]
where the right-hand side is the usual standard deviation, which takes a distribution on the integers as a distribution on the reals, which is supported by the integers. Due to the above inequality any uncertainty tradeoff involving the usual standard deviation also implies the same relation for $\Delta_{\mst}(N)$.

\subsection{Case $(\dst,\dar)$}
The literature on this case begins with the observation that the standard uncertainty relation does not hold and needs to be modified; Judge \cite{judge63, judge64} showed in 1963 that the following tradeoff relation
\begin{equation}\label{jbound}
\Delta_\mst(N)\Delta_\mar(\Theta)\geq c \,\left(1-(3/\pi^2) \Delta_\mar(\Theta)^2\right)
\end{equation}
holds with $c=0.16$, and conjectured the same with $c=1/2$. The conjecture was quickly proved in Ref.~\onlinecite{EvettMahmoud} using the Lagrange multiplier method where $\Delta_\mdi(N)$ is minimised under the constraint of fixed $\Delta_\mar(\Theta)^2$, and in Ref.~\onlinecite{BoMaLe,vanLeuven} by showing that the admissible pairs $(\Delta_\mst(N)^2,\Delta_\mar(\Theta)^2)$ lie above the tangent lines of the curve determined by the equality in \eqref{jbound}. Both methods are essentially equivalent to our approach, and explicitly involve the same eigenvalue problem. In Ref.~\onlinecite{BoMaLe} the bound leading to \eqref{jbound} with the optimal constant was obtained using special properties of the confluent hypergeometric function.

We first show how the above Lemma can easily be applied to derive \eqref{jbound} with the optimal constant $c=1/2$. An essentially identical procedure works also in other cases below. The potential is $V(\theta)=\theta^2$, and we have label the uncertainty pairs as $(x,y)=(\Delta_\mar(\Theta)^2,\Delta_\mst(N)^2)$. As the simplest ansatz we take $f(\theta)=\log\phi(\theta)$ even, hence a polynomial in $\theta^2$, which take as quadratic.  The boundary condition $f'(\pi)=0$ then leaves the one-parameter family
\begin{equation}\label{farc}
   f(\theta) = -\frac a2\, \theta^2\Bigl(1-\frac{\theta^2}{2\pi^2}\Bigr),
\end{equation}
where $a\in{\mathbb R}$ is to be optimised later. The bound given by the Lemma on the uncertainty pair $(x,y)$ is then
\begin{eqnarray}
  y+tx&\geq& E_0\geq E_\phi=\inf_\theta\Bigl\{t\theta^2-f''(\theta)-f'(\theta)^2\Bigr\} \nonumber\\
      &=&a+\inf_\theta \Bigl\{\bigl(t-a^2-\frac{3 a}{\pi ^2}\bigr)\theta^2+\frac{2 a^2}{\pi ^2}\theta^4-\frac{a^2}{\pi ^4} \theta^6 \Bigr\}.\nonumber
\end{eqnarray}
This inequality is valid for any $a>0$ and $t>0$. We choose $t$ so that the linear term in $\xi$ vanishes, i.e., $t=a^2+3a\pi^{-2}$. The remaining polynomial is then positive because of $\theta\leq\pi$, and hence takes its minimum at $\xi=0$. Therefore,
\begin{equation}\label{lowbdarc}
  y+\left(a^2+\frac{3a}{\pi^2}\right)x\geq a.
\end{equation}
The optimal value here is $a=(\pi^2-3x)/(2\pi^2 x)$. Note that this is alwas positive, because the equidistribution has the largest variance, namely $x=\pi^2/3$. Substituting the optimal $a$ in \eqref{lowbdarc} we get exactly \eqref{jbound}.

\subsection{Case $(\dst, \dch)$}
We first recall from \eqref{vMis} that $|\langle e^{i\theta}\rangle_{\rho\mby\Theta}| = 1-\Delta_{\mch}(\Theta)^2/2$. Hence, the von Mises ``circular variance" is associated with the sine and cosine operators $\sin \Theta$ and $\cos \Theta$, introduced by Carruthers, Nieto, Louisell, Susskind, Glogower, and others to study the ``quantum phase problem" \cite{nietorev,carruthers,breitenberger}. The idea was to replace the singular commutator $[N,\Theta]$ by the well-defined relations
\begin{align}\label{sincom}
[N, \sin\Theta] & =i\cos\Theta, & [N,\cos\Theta]=-i\sin\Theta.
\end{align}
Combining the usual Robertson type inequalities associated with these commutators, they obtained (Ref.~\onlinecite[eq. (4.11)]{carruthers}) the tradeoff relation
\begin{equation}\label{CN}
\Delta_{\mst}(N) \Delta_{\mch}(\Theta)\geq \frac{1-\tfrac 12\Delta_\mch(\Theta)^2}{2\bigl[1-
\tfrac14\Delta_\mch(\Theta)^2\bigr]^{1/2}},
\end{equation}
expressed here in quantities relevant for our discussion. It was shown by Jackiw \cite{jackiw} that there are no states for which this inequality is saturated, i.e., this bound is not sharp. It is interesting to note that by replacing the square root term with its trivial upper bound $1$, we get
\begin{equation}\label{judgechord}
\Delta_{\mst}(N) \Delta_{\mch}(\Theta) \geq \frac {1-\tfrac 12\Delta_\mch(\Theta)^2}{2},
\end{equation}
which is just the version of Judge's bound \eqref{jbound} for this metric. Other lower bounds were studied relatively recently \cite{rehacek} by using approximations of the Mathieu functions associated with the exact tradeoff curve.

We first show how \eqref{judgechord} can be obtained using Lemma \ref{boundlemma} by applying the same procedure as above. Interestingly, the relevant trial states are exactly the ones saturating the Robertson inequality for the first commutator in \eqref{sincom}, that is, we take $f(\theta)=a\cos\theta$. Then the resulting variational expression is a function of the variable $\cos\theta$, and hence of the potential $V$:
\begin{equation}\label{lowbdcho}
  y+tx\geq\inf_V\Bigl\{a-\bigl(t-\frac a2-a^2\bigr)V+\frac{a^2}4 V^2 \Bigr\}
\end{equation}
Again it is a good choice to take $t$ so that the first order term in $V$ vanishes, so that the remaining infimum is attained at $V=0$. This gives $t=a/2+a^2$ and
\begin{equation}\label{lowbdcho1}
  y\geq a-\bigl(\frac a2+a^2\bigr)x =\frac{(2-x)^2}{16 x}.
\end{equation}
where at the last equality we have substituted the optimal value $a=(2-x)/(4x)$. On taking the square root this is \eqref{judgechord}.

In order to obtain analytic bounds better than the Carruthers-Nieto tradeoff \eqref{CN}, we apply our method with a trial function which is second order in $\cos\theta$: We take $f(\theta)=a\cos\theta+b (\cos\theta)^2$. The expression to be minimized over $\theta$ can still be written as a polynomial in the potential, and numerical inspection suggests once again that it is a good idea to choose the parameters $t$ and $b$ so that coefficients of $V$ and $V^2$ vanish. This gives linear equations for $t$ and $b$, and the resulting polynomial has its unique minimum, namely $a$, at $V=0$. The analogue of \eqref{lowbdcho1} is then
\begin{equation}\label{lowbdcho2}
  y+\frac{a \left(8 a^2+5 a+2\right)}{8 a+4}\,x\geq a.
\end{equation}
Optimizing $a$ now leads to a third order algebraic equation for which the Cardano solution gives a useless expression in terms of roots. If one just wants the tradeoff curve, the solution is actually not necessary. Defining the coefficient of $x$ as a function $g(a)$, so that $y+g(a)x\geq y$. Optimality requires 
$xg'(a)=1$, so we get the tradeoff curve in parametrized form $a\mapsto (1/g'(a),a-g(a)/g'(a))$.

\section{Outlook}\label{sec:outlook}
The methods employed in this paper for obtaining preparation uncertainty bounds can be applied to a large variety of similar problems. However, the derivation of measurement uncertainty bounds relied entirely on the theorem that phase space symmetry makes the two coincide. It is therefore no surprise that the case of positive number and phase seems much harder to tackle, and the exact uncertainty region is yet to be determined although (non-strict) uncertainty bounds have recently been proven\cite{LPS17}.
Independent efficient methods for obtaining sharp bounds for measurement uncertainty so far have not been found, and it would be highly desirable to find such methods. A possible substitute might be a proof of the conjecture that measurement uncertainty is always larger than preparation uncertainty. Although this inequality must be strict in general, in that way the easily computed preparation uncertainty bounds would automatically be valid (but usually suboptimal) measurement uncertainty bounds. However, the only evidence for supporting such a conjecture is the comparison of cases where either kind of uncertainty vanishes, so such a result is perhaps too much to hope for.

\section*{Acknowledgements}
We thank Joe Renes for suggesting also the discrete metric on $\Z$, and Rainer Hempel for helpful communications concerning the variational principle in Section~\ref{sec:lower}.

RFW acknowledges funding by the DFG through the research training group RTG 1991. JK acknowledges funding from the EPSRC projects EP/J009776/1 and EP/M01634X/1.



%

\end{document}